\documentclass[12pt]{article}

\usepackage{graphics,graphicx,fullpage,natbib,multirow}
\usepackage{amsmath,amssymb,verbatim,epsfig}
\usepackage[dvipsnames,usenames]{color}
\usepackage[normalem]{ulem}
\usepackage{soul}
\usepackage{xcolor}
\usepackage[colorlinks,citecolor=blue,urlcolor=blue,linkcolor=blue]{hyperref}

\newtheorem{proposition}{Proposition}

\newtheorem{defin}{\bf Definition}

\newenvironment{proof}{\noindent{\bf Proof}}{$\diamond$}

\def\ga{\mbox{Ga}}

\def\be{\mbox{Be}}

\def\no{\mbox{N}}

\def\un{\mbox{Un}}

\def\P{\mbox{P}}

\def\pd{\mbox{PD}}

\def\d{\mbox{d}}

\def\bu{{\bf u}}

\def\bX{{\bf X}}

\def\bxi{{\boldsymbol{\xi}}}
\def\bzero{{\bf 0}}
\def\bone{{\bf 1}}
\def\simind{\stackrel{\mbox{\scriptsize{ind}}}{\sim}}
\def\simiid{\stackrel{\mbox{\scriptsize{iid}}}{\sim}}
\newcommand{\balpha}{\boldsymbol{\alpha}}

\begin{document}

\title{\bf Bayesian nonparametric mixtures of asymmetric Archimedean copulas}
\author{
{\sc Radu V. Craiu and Luis E. Nieto-Barajas} \\[2mm]
{\sl University of Toronto \& ITAM} \\[2mm]
{\small {\tt radu.craiu@utoronto.ca \&  luis.nieto@itam.mx }}
}
\date{}
\maketitle

\begin{abstract}
Archimedean copulas are exchangeable: a single generator parameter controls every pairwise association, and the copula is invariant under permutations of its arguments. Even when mixtures of Archimedean copulas are formulated, the exchangeability feature is inherited. Non exchangeable copulas can be defined through the Khoudraji transformation of an Archimedean copula. This relies in replacing the arguments by a power transformation and adding a factor based on the independence copula. The construction produces pairwise Kendall tau coefficients that depend on the power parameters and recovers the original Archimedean copula as a particular case. We propose a Bayesian nonparametric mixture model by mixing over the copula generator parameter $\theta$ and the parameters $\balpha$ that characterize the asymmetry. We use the Poisson--Dirichlet process as a mixing distribution, which groups observations by both strength and direction of dependence. We derive the copula density, the pairwise association measures, the tail coefficients, and the full posterior conditional distributions to perform a Bayesian analysis. We carry out simulation studies and real data analyses. The results show that the Khoudraji mixture improves greatly on, or agrees  with, the exchangeable fit. 
\end{abstract}

\vspace{0.2in} \noindent {\sl Keywords}: Archimedean copula, asymmetric copula, Bayesian nonparametrics, Khoudraji transform, mixture model, multivariate dependence.

\section{Introduction}
\label{sec:intro}

Let $\bX=(X_1,\ldots,X_d)$ be a continuous $d$-dimensional random vector with joint cumulative distribution function (henceforth, CDF) $F$ and marginal CDFs $F_j$.  By \cite{Sklar:1959} there is a unique copula $C:[0,1]^d\to[0,1]$ such that $F(x_1,\ldots,x_d)=C(F_1(x_1),\ldots,F_d(x_d))$.  Dependence measures such as Kendall's tau are functionals of $C$ alone \citep{nelsen:06}.

A leading parametric class is the Archimedean family
\begin{equation}
\label{eq:ac}
C(u_1,\ldots,u_d)=\psi^{-1}\bigl\{\psi(u_1)+\cdots+\psi(u_d)\bigr\},
\end{equation}
generated by a continuous, decreasing, and convex function $\psi:[0,1]\to[0,\infty]$ with $\psi(1)=0$ and $\psi(0)=\infty$ \citep{nelsen:06,mcneil&neslehova:09}. The copula $C_\psi$ is exchangeable in the sense that $C_\psi(u_{\sigma(1)},\ldots,u_{\sigma(d)})=C_\psi(u_1,\ldots,u_d)$ for every permutation $\sigma$, and every pairwise margin is the same bivariate Archimedean copula. Consequently, every pair $(X_j,X_k)$ shares the same Kendall tau $\tau_\theta$, a single number determined by the generator parameter $\theta$.

\cite{pan&al:25} mixed Archimedean copulas with a Poisson--Dirichlet process \citep{pitman&yor:97} on $\theta$.  The Bayesian nonparametric (BNP) mixture is more flexible than a single Archimedean copula because it can be multimodal and it clusters the sample in subgroups with different association parameters. However, every component remains exchangeable, so the model cannot distinguish $\tau_{12}$ from $\tau_{13}$. %
This work is motivated by the need to extend \cite{pan&al:25} to include non-exchangeable copula structures within a BNP mixture approach. Importantly, the new construction generalizes and contains the exchangeable case.

A standard device for breaking exchangeability, due to \cite{khoudraji:95} and studied by \cite{genest&al:98}, \cite{liebscher:08} and \cite{durante&salvadori:10}, is to combine an Archimedean copula with the independence copula after coordinate-wise power transforms.  The resulting family is no longer symmetric in its arguments, pairwise Kendall $\tau$'s may differ, and one coordinate can be closer to independence than another, while the original Archimedean copula and the independence copula remain nested.

This paper extends the Bayesian nonparametric mixture of symmetric Archimedean copulas \citep{pan&al:25} to the assymmetric Khoudraji case. This involves placing a Poisson--Dirichlet process prior on the pair $(\theta,\balpha)$, where $\theta$ is the generator parameter and $\balpha=(\alpha_1,\ldots,\alpha_d)\in[0,1]^d$ is the Khoudraji asymmetry vector. 

The rest of the paper is organised as follows. Section~\ref{sec:khoudraji} reviews the transform, derives the copula density, and records association and tail formulae. Section~\ref{sec:mixmodel} defines the Bayesian nonparametric mixture and the posterior sampler. Section~\ref{sec:numerical} describes numerical experiments that parallel those of \cite{pan&al:25}, including refits of their occupancy and red-wine analyses. Section~\ref{sec:conclusion} concludes.

\section{Khoudraji-transformed Archimedean copulas}
\label{sec:khoudraji}

\subsection{The construction}

Let $C_\theta$ be a $d$-dimensional Archimedean copula with generator $\psi_\theta$ and association parameter $\theta\in\Theta$, %
and let $\Pi(\bu)=\prod_{j=1}^d u_j$ denote the independence copula. For $\balpha=(\alpha_1,\ldots,\alpha_d)\in[0,1]^d$, the Khoudraji transform of $C_\theta$ is
\begin{align}
\nonumber
C_{\theta,\balpha}(\bu)
&=C_\theta\bigl(u_1^{\alpha_1},\ldots,u_d^{\alpha_d}\bigr)\,
\Pi\bigl(u_1^{1-\alpha_1},\ldots,u_d^{1-\alpha_d}\bigr) \\
\label{eq:khoudraji}
&=\psi_{\theta}^{-1}\bigl\{\psi_{\theta}(u_1^{\alpha_1})+\cdots+\psi_{\theta}(u_d^{\alpha_d})\bigr\}
\prod_{j=1}^d u_j^{1-\alpha_j}.
\end{align}
The product of two copulas evaluated in coordinate powers is itself a copula for every $\balpha\in[0,1]^d$ \citep{khoudraji:95,liebscher:08}. We denote by $c_{\theta,\balpha}$ the density of $C_{\theta,\balpha}$  when it exists.

\begin{proposition}[Special exchangeable cases]
\label{prop:special}
Let $C_{\theta,\balpha}$ be given by \eqref{eq:khoudraji}.
\begin{enumerate}
\item[(a)] If $\balpha=\bone$, then $C_{\theta,\balpha}=C_\theta$.
\item[(b)] If $\balpha=\bzero$, then $C_{\theta,\balpha}=\Pi$.
\item[(c)] If $\alpha_j=0$ for a single coordinate $j$, then $U_j$ is independent of $(U_k)_{k\neq j}$ under $C_{\theta,\balpha}$.
\item[(d)] The bivariate margin of $(U_j,U_k)$ is the bivariate Khoudraji transform of the bivariate Archimedean copula generated by $\psi_{\theta}$, with asymmetry parameters $(\alpha_j,\alpha_k)$. In particular $C_{\theta,\balpha}(u,v)\neq C_{\theta,\balpha}(v,u)$ whenever $\alpha_j\neq\alpha_k$, so the copula is not exchangeable.
\end{enumerate}
\end{proposition}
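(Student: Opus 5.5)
The plan is to verify each item by direct substitution into \eqref{eq:khoudraji}, using only the generator properties $\psi_\theta(1)=0$, $\psi_\theta(0)=\infty$ and $\psi_\theta^{-1}(0)=1$, together with the Archimedean identity $\psi_\theta^{-1}\{\psi_\theta(u)\}=u$.

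For (a), setting $\balpha=\bone$ makes every factor $u_j^{1-\alpha_j}=u_j^0=1$. The remaining factor is $\psi_\theta^{-1}\{\sum_j\psi_\theta(u_j)\}=C_\theta(\bu)$.

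For (b), setting $\balpha=\bzero$ gives $u_j^{0}=1$ for all $j$, so each $\psi_\theta(1)=0$. The first factor is then $\psi_\theta^{-1}(0)=1$, and the product becomes $\prod_j u_j=\Pi(\bu)$. I will adopt the convention $0^0=1$, which matters only on the boundary where the copula is $0$ anyway.

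For (c), take $\alpha_j=0$. The $j$th term in the sum is $\psi_\theta(u_j^0)=\psi_\theta(1)=0$, and the $j$th product factor is $u_j$. Hence
\[
C_{\theta,\balpha}(\bu)=u_j\,\psi_\theta^{-1}\Bigl\{\sum_{k\neq j}\psi_\theta(u_k^{\alpha_k})\Bigr\}\prod_{k\neq j}u_k^{1-\alpha_k}.
\]
The second factor is obtained by putting $u_j=1$, so it is exactly the marginal copula of $(U_k)_{k\neq j}$. The joint CDF therefore factors as $u_j$ times the CDF of the remaining block, which is independence.

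For (d), the bivariate margin is obtained by setting $u_i=1$ for all $i\notin\{j,k\}$. Each such coordinate contributes $\psi_\theta(1^{\alpha_i})=0$ to the sum and $1^{1-\alpha_i}=1$ to the product. What remains is $\psi_\theta^{-1}\{\psi_\theta(u^{\alpha_j})+\psi_\theta(v^{\alpha_k})\}\,u^{1-\alpha_j}v^{1-\alpha_k}$. Since the margins of a $d$-dimensional Archimedean copula with generator $\psi_\theta$ are the bivariate Archimedean copulas with the same generator, this is the bivariate Khoudraji transform with parameters $(\alpha_j,\alpha_k)$.

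The non-exchangeability claim is the main obstacle, because "$\neq$" must be shown at some point, not merely generically. Read literally, "whenever $\alpha_j\neq\alpha_k$" fails for $\theta$ at the independence value, where $C_\theta=\Pi$ and the transform reduces to $uv$. I will therefore prove it under the assumption that $C_\theta\neq\Pi$, i.e.\ $\psi_\theta$ is not (a multiple of) $-\log$, and will add this proviso explicitly.

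The approach is to write $C(u,v)=C(v,u)$ and divide by $u^{1-\alpha_j}v^{1-\alpha_k}$. This gives
\[
C_\theta(u^{\alpha_j},v^{\alpha_k})=C_\theta(v^{\alpha_j},u^{\alpha_k})\,(v/u)^{\alpha_k-\alpha_j}.
\]
Next, take $v=1$. The left side becomes $u^{\alpha_j}$ and the right side becomes $u^{\alpha_k}\,u^{\alpha_j-\alpha_k}=u^{\alpha_j}$, so boundary values are uninformative. I will instead use an interior point. If both $\alpha$'s are positive, say $\alpha_j=0<\alpha_k$, then (c) makes $C(u,v)=u\,v$, which is symmetric. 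In that case the pair is independent but still exchangeable, so the "in particular" again needs the proviso that both $\alpha$'s are positive, or else an interpretation as genuine non-exchangeability of the $d$-variate copula.

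For $0<\alpha_j<\alpha_k$, I plan to compare behaviour along $u=v=t$ versus a near-boundary expansion. Differentiate in $v$ at $v=1$: $\partial_v C(u,1)$ versus $\partial_u C(1,u)$. Using $\psi_\theta'$, one gets an explicit expression involving $\psi_\theta'(u^{\alpha})/\psi_\theta'(1)$. Equality for all $u$ then forces a functional equation on $\psi_\theta$ that holds only for the $-\log$ generator. I expect this functional-equation step to be the hardest, and I would resolve it by a Taylor expansion at $u\to1$ if a direct argument is messy.
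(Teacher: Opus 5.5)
Your substitution arguments for (a)--(c), and for identifying the $(j,k)$ margin in (d), coincide with the paper's. You are also right that the symmetry claim fails when $C_\theta=\Pi$ or when one of $\alpha_j,\alpha_k$ vanishes, since the margin is then $uv$. The paper's proof gives no argument for this claim; it only asserts ``symmetric in $(u,v)$ if and only if $\alpha_j=\alpha_k$''.

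\textbf{The gap is in your repair.} The step you plan to prove, namely that equality of the boundary derivatives forces $\psi_\theta$ to be a multiple of $-\log$, is false. Take the Gumbel--Barnett generator $\psi(t)=\log(1-c\log t)$, $c\in(0,1]$. It gives $C(u,v)=uv\exp(-c\log u\log v)\neq\Pi$, yet
\[
C(u^{\alpha},v^{\beta})\,u^{1-\alpha}v^{1-\beta}=uv\exp(-c\,\alpha\beta\log u\log v),
\]
which is symmetric in $(u,v)$ for every $(\alpha,\beta)\in[0,1]^2$. So even with your provisos ($C_\theta\neq\Pi$ and $\alpha_j,\alpha_k>0$), the non-exchangeability statement is false for a general Archimedean generator. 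No argument can deliver it at that generality. You have to restrict to specific families, e.g.\ those of Table~\ref{tab:afam}.

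\textbf{Repair for those families.} Your boundary-derivative idea does work there, once the ratio is the right way round: $\partial_y C_\theta(x,1)=\psi_\theta'(1)/\psi_\theta'(x)$, not $\psi_\theta'(x)/\psi_\theta'(1)$. With $C$ the $(j,k)$ margin this gives
\[
\partial_v C(u,1)=\alpha_k u^{1-\alpha_j}\frac{\psi_\theta'(1)}{\psi_\theta'(u^{\alpha_j})}+(1-\alpha_k)u
\]
and
\[
\partial_u C(1,u)=\alpha_j u^{1-\alpha_k}\frac{\psi_\theta'(1)}{\psi_\theta'(u^{\alpha_k})}+(1-\alpha_j)u .
\]

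For Gumbel and Joe with $\theta>1$ we have $\psi_\theta'(1)=0$, so your inverted ratio would be undefined. The two expressions reduce to $(1-\alpha_k)u$ and $(1-\alpha_j)u$, and symmetry forces $\alpha_j=\alpha_k$ at once (for $\alpha_j,\alpha_k>0$).

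For Clayton, Frank and AMH:
\begin{enumerate}
\item Set $\kappa(x)=\psi_\theta'(1)/\{x\psi_\theta'(x)\}-1$. Equality of the two derivatives reads $\alpha_k\kappa(u^{\alpha_j})=\alpha_j\kappa(u^{\alpha_k})$.
\item Hence $t\mapsto\kappa(e^{-t})/t$ is invariant under $t\mapsto(\alpha_k/\alpha_j)t$. Since it has a limit as $t\downarrow 0$, it is a constant $c$.
\item This gives $\psi_\theta'(x)\propto 1/\{x(1-c\log x)\}$, which is exactly $-\log$ ($c=0$) or Gumbel--Barnett ($c>0$).
\item In all three families $\kappa$ is bounded near $0$: it equals $x^\theta-1$ for Clayton and $-\theta(1-x)$ for AMH, and $x\psi_\theta'(x)\to-1$ for Frank. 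This forces $c=0$, i.e.\ $x\psi_\theta'(x)$ constant, which happens only at the independence value of $\theta$.
\end{enumerate}
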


\begin{proof}
(a) and (b) are immediate from $C_\theta(\bu)\Pi(\bone)=C_\theta(\bu)$ and $C_\theta(\bone)\Pi(\bu)=\Pi(\bu)$.
(c) Substituting $u_j$ and $u_k=1$ for $k\neq j$ gives $C_{\theta,\balpha}(1,\ldots,u_j,\ldots,1)=u_j$, and the joint CDF factors are given by $u_j$ times a copula on the remaining coordinates because $C_\theta(\ldots,u_j^{0},\ldots)=C_\theta(\ldots,1,\ldots)$ does not depend on $u_j$ except through explicit power $u_j^{1-\alpha_j}=u_j$.
(d) The $(j,k)$ margin is
\[
C_{\theta,\balpha}(u,v,1,\ldots,1)
=C_\theta(u^{\alpha_j},v^{\alpha_k},1,\ldots,1)\,u^{1-\alpha_j}v^{1-\alpha_k}.
\]
Archimedean copulas satisfy $C_\theta(u,v,1,\ldots,1)=\psi_{\theta}^{-1}\{\psi_{\theta}(u)+\psi_{\theta}(v)\}$, the corresponding bivariate Archimedean copula. The resulting function of $(u,v)$ is therefore \eqref{eq:khoudraji} in dimension two with parameters $(\alpha_j,\alpha_k)$.  It is symmetric in $(u,v)$ if and only if $\alpha_j=\alpha_k$.
\end{proof}

Thus, the vector $\balpha$ interpolates between a fully exchangeable Archimedean copula and independence, coordinate by coordinate. The copula of \cite{pan&al:25} is the corner $\balpha=\bone$ of the present family.

\subsection{Bivariate density}

For posterior inference, we need the density of the copula. In dimension two write $\alpha,\beta\in[0,1]$ for the two asymmetry parameters and $C_\theta(u,v)$ for the bivariate Archimedean copula, with first partials 
\begin{equation}
\label{eq:hfun1}
C_\theta^{(u)}(u,v)=\frac{\partial}{\partial u}C_\theta(u,v)=\frac{\psi_{\theta}^{(1)}(u)}{\psi_{\theta}^{(1)}\{C_\theta(u,v)\}},\quad C_\theta^{(v)}(u,v)=\frac{\partial}{\partial v}C_\theta(u,v)=\frac{\psi_{\theta}^{(1)}(v)}{\psi_{\theta}^{(1)}\{C_\theta(u,v)\}}, 
\end{equation}
and cross second derivative
\begin{equation}
\label{eq:hfun2}
c_\theta(u,v)=\frac{\partial^2}{\partial u\partial v}C_\theta(u,v)=-\frac{\psi_{\theta}^{(1)}(u)\psi_{\theta}^{(1)}(v)\psi_{\theta}^{(2)}\{C_\theta(u,v)\}}{\bigl[\psi_{\theta}^{(1)}\{C_\theta(u,v)\}\bigr]^3},
\end{equation}
where $\psi_\theta^{(1)}(t)$ and $\psi_\theta^{(2)}(t)$ are the first and second derivatives of the generator $\psi_\theta(t)$ with respect to $t$, respectively. 
Table~\ref{tab:afam} shows five generators $\psi_\theta$ for the most common Archimedean families, together with their first two derivatives.

\begin{proposition}[Bivariate Khoudraji density]
\label{prop:dens2}
The copula \eqref{eq:khoudraji} in dimension two admits the density
\begin{align}
\label{eq:dens2}
c_{\theta,\alpha,\beta}(u,v)
&=
\alpha\beta\,c_\theta(u^\alpha,v^\beta)
+\alpha(1-\beta)\,v^{-\beta}\,C_{\theta}^{(u)}(u^\alpha,v^\beta)
\nonumber\\
&\qquad
+\beta(1-\alpha)\,u^{-\alpha}\,C_{\theta}^{(v)}(u^\alpha,v^\beta)
+(1-\alpha)(1-\beta)\,u^{-\alpha}v^{-\beta}\,C_\theta(u^\alpha,v^\beta),
\end{align}
for $(u,v)\in[0,1]^2$, provided $c_\theta$ exists.  If $\alpha=\beta=1$ then $c_{\theta,\alpha,\beta}=c_\theta$; if $\alpha=\beta=0$ then $c_{\theta,\alpha,\beta}=1$.
\end{proposition}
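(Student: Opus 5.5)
The plan is to differentiate the bivariate form of \eqref{eq:khoudraji} directly, treating it as a product of two factors and applying the product and chain rules. Write
\[
C_{\theta,\alpha,\beta}(u,v)=A(u,v)\,B(u,v),\qquad A(u,v)=C_\theta(u^\alpha,v^\beta),\quad B(u,v)=u^{1-\alpha}v^{1-\beta}.
\]
Since $B$ is a product of a function of $u$ and a function of $v$, the mixed partial of the product splits into four terms:
\[
\partial_u\partial_v(AB)=A_{uv}B+A_uB_v+A_vB_u+AB_{uv}.
\]

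The chain rule gives each derivative of $A$ in terms of \eqref{eq:hfun1} and \eqref{eq:hfun2}:
\[
A_u=\alpha u^{\alpha-1}C_\theta^{(u)}(u^\alpha,v^\beta),\qquad A_v=\beta v^{\beta-1}C_\theta^{(v)}(u^\alpha,v^\beta),
\]
\[
A_{uv}=\alpha\beta u^{\alpha-1}v^{\beta-1}c_\theta(u^\alpha,v^\beta).
\]
For $B$ we have
\[
B_u=(1-\alpha)u^{-\alpha}v^{1-\beta},\qquad B_v=(1-\beta)u^{1-\alpha}v^{-\beta},\qquad B_{uv}=(1-\alpha)(1-\beta)u^{-\alpha}v^{-\beta}.
\]
Multiplying out, the powers of $u$ and $v$ collapse in each term:
\begin{itemize}
\item In $A_{uv}B$, $u^{\alpha-1}u^{1-\alpha}=1$ and likewise for $v$, leaving $\alpha\beta\,c_\theta$.
\item In $A_uB_v$ the $u$-powers cancel, leaving $\alpha(1-\beta)v^{-\beta}C_\theta^{(u)}$.
\item $A_vB_u$ is symmetric to $A_uB_v$ and gives $\beta(1-\alpha)u^{-\alpha}C_\theta^{(v)}$.
\item The last term is $(1-\alpha)(1-\beta)u^{-\alpha}v^{-\beta}C_\theta(u^\alpha,v^\beta)$.
\end{itemize}
This is exactly \eqref{eq:dens2}.

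The step that needs the most care is justifying that this pointwise mixed partial really is the density of $C_{\theta,\alpha,\beta}$. This requires absolute continuity: $C_\theta$ has density $c_\theta$ by assumption, and $B$ is smooth on $(0,1]^2$. So on the open square the expression is a valid mixed derivative, and $C_{\theta,\alpha,\beta}(u,v)=\int_0^u\int_0^v c_{\theta,\alpha,\beta}$ follows by integrating back. The boundary values vanish because $C_{\theta,\alpha,\beta}(0,v)=C_{\theta,\alpha,\beta}(u,0)=0$, using $C_\theta(0,\cdot)=0$ and $u^{1-\alpha}\to0$, with the obvious care when $\alpha=0$. Nonnegativity of the four terms, since $C_\theta$, its partials and $c_\theta$ are all nonnegative, makes Fubini/Tonelli applicable.

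Finally, the special cases follow by substitution. If $\alpha=\beta=1$, only the first term survives, giving $c_\theta(u,v)$. If $\alpha=\beta=0$, only the last term survives, giving $C_\theta(1,1)=1$, consistent with Proposition~\ref{prop:special}(a),(b).
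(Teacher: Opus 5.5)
Your proposal is correct and follows the paper's proof. The paper also factors the copula as $C_\theta(u^\alpha,v^\beta)\,u^{1-\alpha}v^{1-\beta}$, applies the product and chain rules to obtain the same four terms, and reads off the two special cases from Proposition~\ref{prop:special}(a)--(b); your added justification that the mixed partial integrates back to $C_{\theta,\alpha,\beta}$ (using groundedness and nonnegativity of each term) covers a point the paper leaves implicit.
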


\begin{proof}
Write $C(u,v)=A(u,v)B(u)D(v)$ with $A=C_\theta(u^\alpha,v^\beta)$, $B=u^{1-\alpha}$ and $D=v^{1-\beta}$.  Differentiating the product twice and using $A_u=C_{\theta}^{(u)}(u^\alpha,v^\beta)\,\alpha u^{\alpha-1}$, $B_u=(1-\alpha)u^{-\alpha}$ together with the analogous formulae in $v$ produces four terms, which rearrange into \eqref{eq:dens2}.  The two exchangeable cases are Proposition~\ref{prop:special}(a)--(b).
\end{proof}

The four terms in \eqref{eq:dens2} correspond to the four ways of assigning each coordinate to the Archimedean factor or to the independence factor.  Substituting \eqref{eq:hfun1} and \eqref{eq:hfun2} expresses the density entirely in terms of $\psi_\theta$ and its first two derivatives, so the implementation of \cite{pan&al:25} extends by evaluating $C_\theta$, $C_{\theta}^{(u)}$, $C_{\theta}^{(v)}$ and $c_\theta$ at $(u^\alpha,v^\beta)$ and forming the linear combination \eqref{eq:dens2}.

In dimension $d>2$ the density exists whenever $C_\theta$ has a density \citep{mcneil&neslehova:09}.  It is the mixed partial of \eqref{eq:khoudraji} and can be evaluated by automatic differentiation or by the recursive expansion of \cite{liebscher:08}; the bivariate formula is the one required for the pairwise association calculations below and for the numerical experiments of Section~\ref{sec:numerical}.

\subsection{Pairwise Kendall tau}

Kendall's tau of a pair $(U,V)$ is
\begin{equation}
\label{eq:tau}
\tau_{\theta,\alpha,\beta}=4\int_0^1\int_0^1 C_{\theta,\alpha,\beta}(u,v)\,c_{\theta,\alpha,\beta}(u,v)\,\d u\,\d v-1,
\end{equation}
where $C_{\theta,\alpha,\beta}$ is any bivariate margin of Proposition~\ref{prop:special}(d). Unlike the Archimedean case, $\tau_{\theta,\alpha,\beta}$ depends on the pair through $(\alpha,\beta)$ as well as through $\theta$.

\begin{proposition}[Pairwise Kendall tau]
\label{prop:tau}
Write $\tau_\theta$ for Kendall's tau of $C_\theta$ as in Table~\ref{tab:afam}.  For the Khoudraji copula \eqref{eq:khoudraji},
\begin{enumerate}
\item[(a)] $\tau_{\theta,\alpha,\beta}=\tau_\theta$ if $\alpha=\beta=1$;
\item[(b)] $\tau_{\theta,\alpha,\beta}=0$ if $\alpha=0$ or $\beta=0$;
\item[(c)] if $\alpha=\beta\in[0,1]$, then $\tau_{\theta,\alpha,\alpha}$ interpolates continuously between $0$ (at $\alpha=0$) and $\tau_\theta$ (at $\alpha=1$);
\item[(d)] in general $\tau_{\theta,\alpha,\beta}$ is obtained from \eqref{eq:tau} with $C_{\theta,\alpha,\beta}$ and $c_{\theta,\alpha,\beta}$ given by \eqref{eq:khoudraji} and \eqref{eq:dens2}, and the map $(\alpha,\beta)\mapsto\tau_{\theta,\alpha,\beta}$ is continuous on $[0,1]^2$ for any valid $\theta\in\Theta$.
\end{enumerate}
\end{proposition}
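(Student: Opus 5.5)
Parts (a) and (b) reduce to Proposition~\ref{prop:special}, and (c) and (d) reduce to one continuity argument. For (a), Proposition~\ref{prop:special}(a) applied to the bivariate margin of Proposition~\ref{prop:special}(d) gives $C_{\theta,1,1}=C_\theta$. Kendall's tau is a functional of the copula alone, so \eqref{eq:tau} returns $\tau_\theta$.

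For (b), suppose $\alpha=0$. The margin is $C_{\theta}(1,v^\beta)\,u\,v^{1-\beta}=v^\beta\cdot u\,v^{1-\beta}=uv$, because $C_\theta(1,w)=w$ for any copula. Hence $C_{\theta,0,\beta}=\Pi$ and the density is $1$. Then \eqref{eq:tau} gives $4\int_0^1\int_0^1 uv\,\d u\,\d v-1=4\cdot\tfrac14-1=0$. The case $\beta=0$ is symmetric.

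The substantive part is the continuity claim in (d); part (c) then follows by restricting to the diagonal $\alpha=\beta$ and reading off the endpoint values from (a) and (b). I will avoid differentiating under the integral, because the density \eqref{eq:dens2} contains factors $u^{-\alpha}$ and $v^{-\beta}$ and may be unbounded near the boundary. Instead I will use the integration-by-parts representation
\[
\tau_{\theta,\alpha,\beta}=1-4\int_0^1\int_0^1 \frac{\partial C_{\theta,\alpha,\beta}}{\partial u}(u,v)\,\frac{\partial C_{\theta,\alpha,\beta}}{\partial v}(u,v)\,\d u\,\d v .
\]
This identity holds for any bivariate copula; see \cite{nelsen:06}. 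The partial derivatives of a copula exist almost everywhere and lie in $[0,1]$.

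Alternatively, I can use that $\tau$ is continuous with respect to uniform convergence of copulas. Indeed $\tau(C)=4\,\mathbb{E}_C[C(U,V)]-1$. If $C_n\to C$ uniformly, then $C_n\to C$ weakly as distributions, and $|\mathbb{E}_{C_n}C_n-\mathbb{E}_{C}C|\le\|C_n-C\|_\infty+|\mathbb{E}_{C_n}C-\mathbb{E}_C C|\to0$, since $C$ is bounded and continuous. So it suffices to show that $(\alpha,\beta)\mapsto C_{\theta,\alpha,\beta}$ is continuous into $(C([0,1]^2),\|\cdot\|_\infty)$.

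The map $(\alpha,\beta,u,v)\mapsto C_\theta(u^\alpha,v^\beta)\,u^{1-\alpha}v^{1-\beta}$ is continuous on the compact set $[0,1]^4$, with the convention $0^0=1$. The one point needing care is $u=0$, $\alpha\to0$, where $u^\alpha$ is discontinuous. There I use the Lipschitz bound $0\le C_\theta(a,b)\le\min(a,b)$. This gives $|C_{\theta,\alpha,\beta}(u,v)|\le u^\alpha u^{1-\alpha}=u$, and the same bound holds in the limit. Hence the map is jointly continuous at every point with $u=0$ or $v=0$ as well. Joint continuity on a compact set is uniform continuity, which yields uniform convergence in $(u,v)$ as $(\alpha,\beta)$ varies, and therefore continuity of $\tau$.

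The main obstacle is exactly this boundary behaviour near $u=0$ or $v=0$ with small exponents. The uniform-convergence route removes any need for integrability estimates on \eqref{eq:dens2}, so I will rely on it rather than on dominated convergence applied to the density.
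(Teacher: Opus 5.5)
Your proof is correct and has the same skeleton as the paper's. Parts (a) and (b) come from the special cases in Proposition~\ref{prop:special}. Your direct computation $C_\theta(1,v^\beta)\,u\,v^{1-\beta}=uv$ treats $\alpha=0$ with arbitrary $\beta$ more explicitly than the paper's citation does. Parts (c) and (d) come from joint continuity of $(\alpha,\beta,u,v)\mapsto C_{\theta,\alpha,\beta}(u,v)$.

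The difference lies in how continuity of the copula is turned into continuity of $\tau$.
\begin{itemize}
\item The paper asserts joint continuity and boundedness on $[0,1]^2\times(0,1)^2$ and concludes at once that \eqref{eq:tau} is continuous in $(\alpha,\beta)$. This leaves the other factor of the integrand unaddressed: the density $c_{\theta,\alpha,\beta}$ also moves with $(\alpha,\beta)$ and can be unbounded near the boundary of the square. In addition, continuity on the open square does not give uniform convergence.
\item You extend joint continuity to the compact set $[0,1]^4$ using the bound $0\le C_{\theta,\alpha,\beta}(u,v)\le\min(u,v)$. That bound also handles the other boundary discontinuity, that of $u^{1-\alpha}$ at $u=0$, $\alpha=1$, which you did not single out. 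You then get sup-norm continuity of $(\alpha,\beta)\mapsto C_{\theta,\alpha,\beta}$. Finally you use that $\tau(C)=4\,\mathbb{E}_C[C(U,V)]-1$ is continuous under uniform convergence of copulas, via weak convergence.
\end{itemize}

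Your route never touches \eqref{eq:dens2} or any dominating function. It supplies exactly the step the paper's one-line argument glosses over, and it works unchanged for any copula family that is sup-norm continuous in its parameters. The paper's version is shorter but is really a sketch. The integration-by-parts representation you mention first is not needed once you commit to the uniform-convergence route.
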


\begin{proof}
(a) and (b) follow from Proposition~\ref{prop:special}.  For (c) and (d), $C_{\theta,\alpha,\beta}(u,v)$ is jointly continuous in $(\alpha,\beta,u,v)$ on $[0,1]^2\times(0,1)^2$ and bounded, so \eqref{eq:tau} is continuous in $(\alpha,\beta)$ for any valid $\theta\in\Theta$. The endpoint values are (a) and (b).
\end{proof}

\begin{table}[h]
\centering
\begin{tabular}{cccc} \hline \hline \\[-3mm]
Copula & $\psi_\theta(t)$ & $\psi_\theta^{(1)}(t)$ & $\psi_\theta^{(2)}(t)$ \\[1mm]  \hline \\[-3mm]
AMH & $\log\left\{\frac{1-\theta(1-t)}{t}\right\}$ & $\frac{\theta-1}{t\{1-\theta(1-t)\}}$ & $\frac{(1-\theta)(1-\theta+2\theta t)}{\left\{t(1-\theta(1-t))\right\}^2}$ \\[2mm]
CLA & $t^{-\theta}-1$ & $-\theta t^{-(\theta+1)}$ & $\theta(\theta+1) t^{-(\theta+2)}$ \\[2mm]
FRA & $-\log\left\{\frac{\exp(-\theta t)-1}{\exp(-\theta)-1}\right\}$ & $\frac{\theta\exp(-\theta t)}{\exp(-\theta t) - 1}$ & $\frac{\theta^2\exp(-\theta t)}{\{\exp(-\theta t) -1\}^2}$ \\[2mm]
GUM & $(-\log t)^{\theta}$ & $-\frac{\theta}{t}(-\log t)^{\theta-1}$ & $\frac{\theta}{t^2} (-\log t)^{\theta -1}+\frac{\theta(\theta-1)}{t^2}(-\log t)^{\theta -2}$ \\[2mm]
JOE & $-\log\left\{1-(1-t)^\theta\right\}$ & $\frac{-\theta(1-t)^{\theta-1}}{1-(1-t)^\theta}$ & $\frac{\theta(\theta-1)(1-t)^{\theta-2}+\theta(1-t)^{2\theta-2}}{\{1-(1-t)^\theta\}^2}$ \\[2mm]
\hline \hline
\end{tabular}
\caption{Common Archimedean families used as the inner copula $C_\theta$ in \eqref{eq:khoudraji}. Generator $\psi_{\theta}(t)$, together with its first $\psi_\theta^{(1)}(t)$ and second $\psi_\theta^{(2)}(t)$ derivatives. }
\label{tab:afam}
\end{table}

There is, in general, no closed form for $\tau_{\theta,\alpha,\beta}$ except at the corners of the unit square in $(\alpha,\beta)$. Numerical integration of \eqref{eq:tau}, or a Monte Carlo estimate from a sample of the copula, is therefore the operational formula.%

The essential modelling gain is that the Khoudraji copula for $d>2$ can now carry a vector of pairwise tau's. For $d=3$ and $\balpha=(\alpha_1,\alpha_2,\alpha_3)$, the three pairs $(\tau_{\theta,\alpha_1,\alpha_2},\tau_{\theta,\alpha_1,\alpha_3},\tau_{\theta,\alpha_2,\alpha_3})$ are functions of the same $\theta$ but of three different pairs of asymmetry parameters, and they coincide if and only if $\alpha_1=\alpha_2=\alpha_3=1$ (the Archimedean corner) or the corresponding $\alpha$'s vanish.

\subsection{Tail coefficients}

Write $\lambda_L=\lim_{u\downarrow 0}C(u,u)/u$ and $\lambda_U=\lim_{u\uparrow 1}\{1-2u+C(u,u)\}/(1-u)$ for the lower and upper tail dependence coefficients of a bivariate copula $C(u,v)$. For the five Archimedean families of Table~\ref{tab:afam} one has $\lambda_L=2^{-1/\theta}$ and $\lambda_U=0$ for Clayton with $\theta>0$; $\lambda_L=0$ and $\lambda_U=2-2^{1/\theta}$ for Gumbel and for Joe; and $\lambda_L=\lambda_U=0$ for Frank and AMH.

\begin{proposition}[Tail coefficients of the Khoudraji transform]
\label{prop:tail}
Let $C_{\theta,\alpha,\beta}$ denote \eqref{eq:khoudraji} in dimension two with $\lambda_L(\theta,\alpha,\beta)$ and $\lambda_U(\theta,\alpha,\beta)$ its corresponding lower and upper tail coefficients, respectively. 
\begin{enumerate}
\item[(a)] $\lambda_L(\theta,\alpha,\beta)=0$ unless $\alpha=\beta=1$, in which case the Archimedean coefficient is recovered.
\item[(b)] If the inner copula $C_\theta$ is Gumbel, the Khoudraji transform is an extreme-value copula with
\[
\lambda_U(\theta,\alpha,\beta)=\alpha+\beta-(\alpha^\theta+\beta^\theta)^{1/\theta},
\]
which is strictly positive whenever $\alpha,\beta>0$ and $\theta>1$, and equals $2-2^{1/\theta}$ at $\alpha=\beta=1$.
\item[(c)] If the inner copula $C_\theta$ is Clayton, Frank or AMH, then $\lambda_U(\theta,\alpha,\beta)=0$.  If it is Joe, upper-tail dependence is inherited in diluted form for $\alpha,\beta>0$.
\end{enumerate}
Thus lower-tail dependence (Clayton) requires both coordinates to be fully assigned to the Archimedean factor, whereas upper-tail dependence (Gumbel, Joe) survives as soon as both asymmetry parameters are positive.
\end{proposition}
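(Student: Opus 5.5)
The plan is to reduce all three parts to expansions of $C_{\theta,\alpha,\beta}(u,u)=C_\theta(u^\alpha,u^\beta)\,u^{2-\alpha-\beta}$ near $u=0$ and $u=1$. The only tools needed are the Fréchet upper bound $C_\theta(a,b)\le\min(a,b)$ and the $1$-Lipschitz property of copulas, $|C(a,b)-C(a',b')|\le|a-a'|+|b-b'|$.

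For (a), assume without loss of generality $\alpha\le\beta$, so that $\min(u^\alpha,u^\beta)=u^\beta$ for $u\in(0,1)$. The Fréchet bound then gives
\[
0\le \frac{C_{\theta,\alpha,\beta}(u,u)}{u}\le u^{\beta}\,u^{1-\alpha-\beta}=u^{1-\alpha}.
\]
The right-hand side tends to $0$ as $u\downarrow0$ whenever $\alpha<1$. So $\lambda_L=0$ unless $\alpha=\beta=1$, and in that case Proposition~\ref{prop:special}(a) returns the Archimedean coefficient.

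For (b), write $x=-\log u$ and $y=-\log v$. The Gumbel copula is $C_\theta(u,v)=\exp\{-(x^\theta+y^\theta)^{1/\theta}\}$, and the transform \eqref{eq:khoudraji} becomes $\exp\{-\ell(x,y)\}$ with
\[
\ell(x,y)=\{(\alpha x)^\theta+(\beta y)^\theta\}^{1/\theta}+(1-\alpha)x+(1-\beta)y .
\]
This $\ell$ is homogeneous of degree one, so $C_{\theta,\alpha,\beta}(u^t,v^t)=C_{\theta,\alpha,\beta}(u,v)^t$ and the copula is extreme-value. For extreme-value copulas the standard identity $\lambda_U=2-\ell(1,1)$ holds. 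Computing $\ell(1,1)$ gives
\[
\lambda_U=2-(\alpha^\theta+\beta^\theta)^{1/\theta}-(2-\alpha-\beta)=\alpha+\beta-(\alpha^\theta+\beta^\theta)^{1/\theta}.
\]
Strict positivity for $\alpha,\beta>0$ and $\theta>1$ is the strict inequality $\|(\alpha,\beta)\|_\theta<\|(\alpha,\beta)\|_1$. This is strict because $(\alpha/s)^\theta+(\beta/s)^\theta<\alpha/s+\beta/s=1$ with $s=\alpha+\beta$, each ratio lying in $(0,1)$. Setting $\alpha=\beta=1$ gives $2-2^{1/\theta}$.

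For (c), put $u=1-h$. Then $u^\alpha=1-\alpha h+O(h^2)$, $u^\beta=1-\beta h+O(h^2)$ and $u^{2-\alpha-\beta}=1-(2-\alpha-\beta)h+O(h^2)$. Let $\hat C(s,t)=s+t-1+C_\theta(1-s,1-t)$ be the joint survival function of $C_\theta$. By the Lipschitz property the $O(h^2)$ perturbations of the arguments can be dropped, so
\[
C_\theta(u^\alpha,u^\beta)=1-(\alpha+\beta)h+\hat C(\alpha h,\beta h)+O(h^2).
\]
Multiplying by $u^{2-\alpha-\beta}$ gives $1-2u+C_{\theta,\alpha,\beta}(u,u)=\hat C(\alpha h,\beta h)+O(h^2)$, hence
\[
\lambda_U(\theta,\alpha,\beta)=\lim_{h\downarrow0}\hat C(\alpha h,\beta h)/h,
\]
the upper tail function of $C_\theta$ evaluated at $(\alpha,\beta)$.

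When $C_\theta$ has $\lambda_U=0$ (Clayton, Frank, AMH), monotonicity gives $\hat C(\alpha h,\beta h)\le\hat C(mh,mh)=o(h)$ with $m=\max(\alpha,\beta)$, so $\lambda_U=0$. For Joe, $\hat C(s,t)=(s^\theta+t^\theta-s^\theta t^\theta)^{1/\theta}$, so the upper tail function is the same as Gumbel's. The limit is therefore again $\alpha+\beta-(\alpha^\theta+\beta^\theta)^{1/\theta}>0$ for $\alpha,\beta>0$ and $\theta>1$, which is the diluted inherited dependence claimed. The same limit formula also re-derives (b).

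The main obstacle is justifying the $O(h^2)$ error control uniformly. The Lipschitz bound handles it, because the perturbations of $u^\alpha$ and $u^\beta$ away from $1-\alpha h$ and $1-\beta h$ are $O(h^2)=o(h)$. A second point to check is that the Joe tail function is the claimed one; I would verify this directly from the Joe copula formula rather than from the generator in Table~\ref{tab:afam}.
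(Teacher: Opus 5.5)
Your argument is correct apart from one slip in the Joe case. For (a) and (b) it is essentially the paper's proof. In (a) you use the same Fr\'echet bound $u^{1-\min(\alpha,\beta)}$. In (b) you use the same extreme-value computation: your $\lambda_U=2-\ell(1,1)$ is the paper's $2\{1-A(1/2)\}$. You check max-stability directly from the homogeneity of $\ell$ instead of citing the closure of extreme-value copulas under the Khoudraji transform. Your normalisation by $\alpha+\beta$ is also a cleaner proof of $(\alpha^\theta+\beta^\theta)^{1/\theta}<\alpha+\beta$ than the paper's appeal to strict convexity.

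The genuine difference is in (c). The paper disposes of it in one line via the Archimedean coefficients and Proposition~\ref{prop:special}(a), but that only covers the corner $\alpha=\beta=1$. Your expansion at $u=1-h$ reduces $\lambda_U(\theta,\alpha,\beta)$ to the upper tail function $\lim_{h\downarrow0}\hat C(\alpha h,\beta h)/h$ of the inner copula, and this is what actually proves the vanishing for Clayton, Frank and AMH at general $(\alpha,\beta)$. It also makes the ``diluted'' Joe claim quantitative and recovers the formula in (b) without extreme-value theory. Your route needs only the Fr\'echet and Lipschitz bounds. The paper's route is shorter but, for (c), incomplete.

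The slip: the Joe survival copula is $\hat C(s,t)=s+t-(s^\theta+t^\theta-s^\theta t^\theta)^{1/\theta}$, not $(s^\theta+t^\theta-s^\theta t^\theta)^{1/\theta}$. What you wrote is $1-C_\theta(1-s,1-t)$, the probability of the union of the two exceedance events rather than their intersection. Taken literally it would give $\lambda_U=(\alpha^\theta+\beta^\theta)^{1/\theta}$, which exceeds $1$ at $\alpha=\beta=1$. With the $s+t$ term restored, $\hat C(\alpha h,\beta h)/h\to\alpha+\beta-(\alpha^\theta+\beta^\theta)^{1/\theta}$. This is the Gumbel tail function you claim, so your conclusion stands.
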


\begin{proof}
Any copula satisfies $C(s,t)\le\min(s,t)$, so
\[
\frac{C_{\theta,\alpha,\beta}(u,u)}{u}
\le u^{\max(\alpha,\beta)+1-\alpha-\beta}
=u^{1-\min(\alpha,\beta)}.
\]
The right-hand side tends to $0$ unless $\min(\alpha,\beta)=1$, i.e. unless $\alpha=\beta=1$. This is (a).
\noindent
For (b), Gumbel is extreme-value with stable tail dependence function $\ell(x,y)=(x^\theta+y^\theta)^{1/\theta}$.  The Khoudraji transform of an extreme-value copula is extreme-value \citep{genest&al:98,durante&salvadori:10}, with Pickands function
\[
A_{\alpha,\beta}(w)=\bigl[(\alpha w)^\theta+\{\beta(1-w)\}^\theta\bigr]^{1/\theta}+(1-\alpha)w+(1-\beta)(1-w).
\]
The identity $\lambda_U=2\bigl\{1-A(1/2)\bigr\}$ yields the displayed formula.  For $\theta>1$ the map $x\mapsto x^\theta$ is strictly convex, so $(\alpha^\theta+\beta^\theta)^{1/\theta}<\alpha+\beta$ whenever $\alpha,\beta>0$, and therefore $\lambda_U>0$.  (c) follows from the corresponding Archimedean coefficients and Proposition~\ref{prop:special}(a).
\end{proof}

Asymmetry of the density survives for $\alpha\neq\beta$ even when both tail coefficients vanish: $c_{\theta,\alpha,\beta}(u,v)\neq c_{\theta,\alpha,\beta}(v,u)$, so probability mass concentrates towards one axis. In the following  Section~\ref{sec:mixmodel} we propose a mixture that can describe populations in which some clusters are exchangeable and heavy-tailed (Archimedean, $\balpha=\bone$) while others are asymmetric, with tail behaviour governed by Proposition~\ref{prop:tail}.

\section{BNP mixtures}
\label{sec:mixmodel}

\subsection{Model}

Let $\bxi=(\theta,\balpha)\in\Xi:=\Theta\times[0,1]^d$ collect the Archimedean parameter and the Khoudraji vector.  We mix the family $\{C_{\theta,\balpha}(\,\cdot\,):\bxi\in\Xi\}$ with a Poisson--Dirichlet process on $\Xi$, exactly as \cite{pan&al:25} mixed $\{C_\theta(\,\cdot\,):\theta\in\Theta\}$ on $\Theta$.  A probability measure $G$ has a Poisson--Dirichlet prior $G\sim\pd(a,b,G_0)$ with $a\in[0,1)$, $b>-a$ and centering measure $G_0$ on $\Xi$ when
\begin{equation}
\label{eq:pdp}
G(\cdot)=\sum_{k=1}^\infty \omega_k\delta_{\bxi_k}(\cdot),
\end{equation}
with stick-breaking weights $\omega_k$ as in \cite{pitman&yor:97} and atoms $\bxi_k\simiid G_0$.  The mixture copula is
\begin{equation}
\label{eq:mixC}
C(\bu)=\int C_{\bxi}(\bu)\,G(\d\bxi)=\sum_{k=1}^\infty\omega_k C_{\theta_k,\balpha_k}(\bu).
\end{equation}
Hierarchically, for observations $i=1,\ldots,n$,
\begin{align}
\nonumber
(U_{1i},\ldots,U_{di})\mid\theta_i,\balpha_i
&\simind c_{\theta_i,\balpha_i}(\bu_i),\\
\label{eq:bnpmix}
(\theta_i,\balpha_i)\mid G
&\simiid G,\\
\nonumber
G
&\sim\pd(a,b,G_0),
\end{align}
where $c_{\theta,\balpha}$ is the copula density associated to \eqref{eq:khoudraji}. 
Because $G$ is almost surely discrete, $\P(\bxi_i=\bxi_{i'})>0$, and the distinct values of $\bxi_i$ define clusters that share both a dependence strength $\theta$ and an asymmetry pattern $\balpha$.  Smaller $a$ and $b$ produce more ties.

The centering measure $G_0$ is defined through its density $g_0$ given by,
\begin{equation}
\label{eq:G0}
g_0(\theta,\balpha)=h(\theta)\prod_{j=1}^d \be(\alpha_j\mid c_\alpha,d_\alpha),
\end{equation}
where $h(\theta)$ is the centering density of the Archimedean parameter that varies according to its parameter space $\Theta$. Specifically, we take $\un(\theta\mid-1,1)$ for AMH, $\ga(\theta\mid c_\theta,d_\theta)$ for Clayton, $\ga(\theta-1\mid c_\theta,d_\theta)$ for Gumbel and Joe, and $\no(\theta\mid\mu_\theta,\lambda_\theta)$ for Frank.  We restrict Clayton to $\theta>0$: negative bivariate Clayton copulas have a singular boundary component and therefore are not represented by the absolutely continuous likelihood used here. Independent beta margins on the $\alpha_j$ cover the whole unit interval and include the Archimedean corner in the support (the prior mass near $\balpha=\bone$ is controlled by $c_\alpha,d_\alpha$). A more concentrated choice, such as $\be(\alpha_j\mid 8,1)$, encodes the belief that variables in the data are close to exchangeable; $\be(\alpha_j\mid 1,1)$ is a vaguer prior belief.

Marginalising $G$ yields the same generalised P\'olya urn as in \cite{pitman:95}, now on $\Xi$:
\begin{equation}
\label{eq:prior}
f(\bxi_i\mid\bxi_{-i})
=\frac{b+am_i}{b+n-1}g_0(\bxi_i)
+\sum_{j=1}^{m_i}\frac{n_{i,j}^*-a}{b+n-1}\delta_{\bxi_{i,j}^*}(\bxi_i),
\end{equation}
with the obvious notation for the $m_i$ distinct values $\bxi_{i,j}^*$ and frequencies $n_{i,j}^*$ in $\bxi_{-i}$.

\begin{proposition}[Mixture Kendall tau]
\label{prop:taumix}
Let $C(U,V)$ the marginal bivariate copula corresponding to mixture model \eqref{eq:mixC} with $\bxi=(\theta,\alpha,\beta)$. Kendall's tau for the pair $(U,V)$ is
\begin{equation}
\label{eq:taumix}
\tau=4\sum_{k=1}^\infty\sum_{j=1}^\infty \omega_k\omega_j\iint_{[0,1]^2}C_{\bxi_k}(u,v)c_{\bxi_j}(u,v)\d u\d v-1.
\end{equation}
In general, the vector $(\tau_{jl})_{j<l}$ for the pairs $(U_j,U_l)$ is not constrained to have equal entries.
\end{proposition}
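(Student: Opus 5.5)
The plan is to apply the defining formula \eqref{eq:tau} directly to the bivariate margin of the mixture and then use linearity. First, identify that margin. Setting $u_m=1$ for all coordinates other than $j,l$ in \eqref{eq:mixC} gives
\[
C(u,v)=\sum_{k=1}^\infty\omega_k\,C_{\theta_k,\alpha_{kj},\alpha_{kl}}(u,v).
\]
By Proposition~\ref{prop:special}(d), each summand is the bivariate Khoudraji copula with parameters $\bxi_k=(\theta_k,\alpha_{kj},\alpha_{kl})$. Because each $C_{\bxi_k}$ is a copula and $\sum_k\omega_k=1$ almost surely, $C$ is again a bivariate copula.

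Next, obtain the density of the margin. Each $C_{\bxi_k}$ has density $c_{\bxi_k}$ given by Proposition~\ref{prop:dens2}. Since $c_{\bxi_k}\ge0$ and $\int c_{\bxi_k}=1$, the function $c=\sum_k\omega_k c_{\bxi_k}$ is integrable by monotone convergence. Tonelli then justifies interchanging the sum with the double integral, so $c$ is the density of $C$: integrating $c$ over $[0,u]\times[0,v]$ returns $C(u,v)$.

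Now plug $C$ and $c$ into \eqref{eq:tau}:
\[
\tau=4\iint\Bigl(\sum_k\omega_kC_{\bxi_k}\Bigr)\Bigl(\sum_j\omega_jc_{\bxi_j}\Bigr)\,\d u\,\d v-1 .
\]
To swap the double sum and the integral, note that every term $\omega_k\omega_jC_{\bxi_k}c_{\bxi_j}$ is nonnegative, so Tonelli for sums and integrals applies. Moreover $0\le C_{\bxi_k}\le1$, so each integral is at most $\int c_{\bxi_j}=1$, and the double series converges absolutely with total at most $1$. This gives \eqref{eq:taumix}. The only delicate point is this interchange, and nonnegativity handles it.

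For the final sentence, it suffices to exhibit parameter values with unequal pairwise taus. Take $G$ degenerate at one atom, or simply let $\omega_1\to1$, with $\balpha=(1,1,0)$ and any $\theta$ such that $\tau_\theta\neq0$. Then Proposition~\ref{prop:tau}(a) gives $\tau_{12}=\tau_\theta\neq0$, while Proposition~\ref{prop:tau}(b) gives $\tau_{13}=0$. Since the prior $G_0$ in \eqref{eq:G0} has full support on $[0,1]^d$, the continuity in Proposition~\ref{prop:tau}(d) shows that such configurations, and nearby ones, receive positive prior mass. By contrast, in the exchangeable case $\balpha=\bone$ the taus are forced to be equal. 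Hence, in general, the vector of pairwise taus is unconstrained to equality.
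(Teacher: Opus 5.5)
Your proposal is correct and follows the paper's route: substitute the mixture copula \eqref{eq:mixC} and its mixture density into \eqref{eq:tau} and expand the product. You add two things the paper omits: the Tonelli justifications (for the mixture density and for swapping the double series with the integral), and an explicit example, $\balpha=(1,1,0)$ with $\tau_\theta\neq0$, for the final sentence, which the paper's proof does not address. That example alone settles the final sentence. Your extra claim that such configurations carry positive prior mass is not needed, and it would require more than Proposition~\ref{prop:tau}(d). It needs joint continuity in $\theta$ as well, together with a bound showing that the mixture tau lies within $4(1-\omega_1^2)$ of the tau of the first component.
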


\begin{proof}
Substituting both the mixture copula \eqref{eq:mixC} and its mixture density into \eqref{eq:tau}, and then expanding their product, gives \eqref{eq:taumix}.  The cross-component terms with $k\ne j$ show, in particular, that Kendall's tau is not obtained by simply averaging the component-specific tau's.
\end{proof}

The result in Proposition \ref{prop:taumix} is a correction of equation (9) in  \cite{pan&al:25}. In Section~\ref{ssec:fitting} we will provide a way of approximating \eqref{eq:taumix} using the predictive mixture. 

\subsection{Posterior distributions}
\label{ssec:posterior}

The full posterior conditional distribution of $\bxi_i=(\theta_i,\balpha_i)$ is
\[
f(\bxi_i\mid\bu,\bxi_{-i})
\propto
(b+am_i)\,c_{\bxi_i}(\bu_i)\,g_0(\bxi_i)
+\sum_{j=1}^{m_i}(n_{i,j}^*-a)\,c_{\bxi_i}(\bu_i)\,\delta_{\bxi_{i,j}^*}(\bxi_i).
\]
The likelihood is not conjugate for $g_0$, so we sample with a Gibbs sampler \citep{smith&roberts:93} using Algorithm~8 of \cite{neal:00}, as in \cite{pan&al:25}, with the atom now equal to $\bxi$ rather than $\theta$. Specifically,

\begin{enumerate}
\item[(i)] For each $i=1,\ldots,n$, draw $r$ auxiliary values $\bxi^\star=\{\bxi_{m_i+1}^\star,\ldots,\bxi_{m_i+r}^\star\}$ from $g_0$.
\item[(ii)] Draw $\bxi_i$ from the discrete distribution on $\{\bxi_{i,1}^*,\ldots,\bxi_{i,m_i}^*\}\cup\bxi^\star$ with masses proportional to $(n_{i,j}^*-a)\,c_{\bxi_{i,j}^*}(\bu_i)$ on the existing atoms and to $\{(b+am_i)/r\}\,c_{\bxi_j^\star}(\bu_i)$ on the auxiliary atoms.
\item[(iii)] Let $(\bxi_1^*,\ldots,\bxi_m^*)$ be the distinct values in $\{\bxi_i\}_{i=1}^n$.  Resample each $\bxi_\ell^*=(\theta_\ell^*,\balpha_\ell^*)$ from
\[
f(\bxi_\ell\mid c.c.)
\propto
g_0(\bxi_\ell)\prod_{\{i:\bxi_i=\bxi_\ell^*\}} c_{\bxi_\ell}(\bu_i)
\]
by two Metropolis--Hastings random-walk steps, constrained to $\Xi$, where $c.c.$ stands for clustering configuration. First, $\theta_\ell$ is updated with a uniform proposal on $(\theta_\ell^*-\kappa_\theta,\theta_\ell^*+\kappa_\theta)\cap\Theta$, whose acceptance probability includes the ratio of the (truncated) proposal widths. Second, with $\theta_\ell$ held fixed, each $\alpha_{\ell,j}$ is updated on the logit scale,
\[
\mathrm{logit}(\alpha_{\ell,j}')=\mathrm{logit}(\alpha_{\ell,j}^*)+\varepsilon_j,\qquad\varepsilon_j\sim\no(0,\kappa_\alpha^2),
\]
where $\kappa_\alpha$ is a standard deviation. The Gaussian increment is symmetric on the logit scale, so the proposal density on $\alpha$ carries the logistic Jacobian and the vector $\balpha_\ell'$ is accepted with probability
\[
\min\Biggl\{1,\;
\frac{f(\theta_\ell^*,\balpha_\ell'\mid c.c.)}{f(\theta_\ell^*,\balpha_\ell^*\mid c.c.)}
\prod_{j=1}^d\frac{\alpha_{\ell,j}'(1-\alpha_{\ell,j}')}{\alpha_{\ell,j}^*(1-\alpha_{\ell,j}^*)}\Biggr\}.
\]
\end{enumerate}

The hyperparameters $(a,b)$ are assigned a beta prior for $a$ and a shifted-gamma prior for $b$ given $a$ of the form
\[
f(a,b)=\ga(b+a\mid c_b,d_b)\,\be(a\mid c_a,d_a).
\]
This is updated by the exchangeable partition probability function of \cite{pitman:95}. The corresponding posterior conditional distributions are sampled using MH steps. For $a$ we use the proposal $a'\sim\un(\max\{0,a-\kappa_a\},\min\{a+\kappa_a,1\})$, with the ratio of proposal widths in the acceptance probability, and for $b$ a random walk on the log scale, $\log(b'+a)=\log(b+a)+\varepsilon$ with $\varepsilon\sim\no(0,\kappa_b^2)$, so that $b'+a>0$ automatically.

The tuning parameters $(\kappa_\theta,\kappa_\alpha,\kappa_a,\kappa_b)$, started at $(0.1,1,1,1)$, are adapted \emph{only during the burn-in period}, in batches of 50 iterations, towards the acceptance window $[0.3,0.4]$ \citep{roberts&rosenthal:09}. After batch $j=1,2,\ldots$ each tuning parameter $\kappa$ is multiplied by $\exp(\pm\delta_j)$ with $\delta_j=\min\{0.05,\,j^{-0.6}\}$, where the sign is positive if the acceptance rate $AR^{(j)}$ of the corresponding update in that batch exceeds the target $0.35$ (the centre of the window) and negative otherwise; the acceptance counters are then reset. The adaptation step $\delta_j$ vanishes as $j$ grows, and it stops altogether at the end of burn-in. Hence the retained draws come from a Markov chain with fixed transition kernel (finite adaptation), rather than from a chain that keeps adapting until the end of the run, and no additional ergodicity conditions for adaptive MCMC are required.

\subsection{Fitting measure, pairwise tau, and clustering}
\label{ssec:fitting}

The log pseudo-marginal likelihood $\mathrm{LPML}=\sum_{i=1}^n\log(\mathrm{CPO}_i)$ is computed using a Monte Carlo approximation of the conditional predictive ordinate $\mathrm{CPO}_i=f(\bu_i\mid\bu_{-i})$ \citep{geisser&eddy:79,nieto&contreras:14} as 
\[
\widehat{\mathrm{CPO}}_i
=\left[\frac{1}{L}\sum_{l=1}^L
\left\{c_{\bxi_i^{(l)}}(\bu_i)\right\}^{-1}\right]^{-1},
\]
where $\bxi_i^{(l)}=(\theta_i^{(l)},\balpha_i^{(l)})$ is the atom to which observation $i$ is allocated at retained draw $l$. The same estimator is used for all models and data sets in Section~\ref{sec:numerical}, so that our LPML values are directly comparable with those reported by \cite{pan&al:25}. As a check, we also computed the harmonic mean of the Poisson--Dirichlet predictive densities, which uses the whole mixture; for the 102 simulation fits it differs from the estimator we use by less than 3 LPML units in 99 fits (see the Github repository).

Pairwise Kendall tau's are estimated from simulations from the complete posterior predictive mixture.  This retains the cross-component terms in \eqref{eq:taumix}. A specific clustering is obtained from the posterior similarity matrix by the SALSO search of \cite{dahl&al:22} with Binder loss; given the chosen partition we report for each cluster $\ell$, the posterior of $(\theta_\ell^*,\balpha_\ell^*)$ together with its specific weight $(n_\ell^*-a)/(b+n)$.

The cluster-specific vector $\balpha_\ell^*$ is the new inferential object. Coordinates with $\alpha_{\ell,j}^*$ near one are those through which cluster $\ell$ transmits Archimedean dependence; coordinates with $\alpha_{\ell,j}^*$ near zero are those that cluster $\ell$ treats as independent of the rest.

\section{Numerical experiments}
\label{sec:numerical}

All programs used in this section, together with the configuration files, seeds and scripts, are available at 
\url{https://github.com/rcraiu/NPB_Assymetric_Archimedean}. We
translated, with the help of Cursor \citep{cursor2026},  some of our subroutines written in R into C++ via the Rcpp framework.

The experiments below extend those of \cite{pan&al:25}.  Throughout, the centering measure \eqref{eq:G0} is defined by $c_\theta=4$, $d_\theta=1$ for Clayton, Gumbel and Joe, and $\mu_\theta=0$ and $\lambda_\theta=4$ for Frank. Additionally, $c_\alpha=d_\alpha=1$ unless otherwise stated, and for the prior on $(a,b)$ we take $c_a=c_b=1$ together with $d_a=d_b=20$ that favours a small number of components. MCMC is run for $15{,}000$ iterations with a burn-in of $5{,}000$ and a thinning of $5$.  Simulations~1, 3 and 4 each use one data set generated from the seed recorded in the reproducibility files, whereas Simulation~2 uses 20 seeded data sets; in all cases the competing kernels and the exchangeable mixture of \cite{pan&al:25} (the current model with $\balpha=\bone$) are fitted to identical observations.

\subsection{Simulation 1: a single asymmetric component}

We generated $n\in\{200,500\}$ observations from a bivariate Khoudraji--Clayton copula with $\theta=5$ and $(\alpha,\beta)\in\{(1,1),(0.8,0.3),(0.4,0.4),(0,0)\}$.  The four choices are the exchangeable Clayton copula, an asymmetric copula, a symmetric but diluted copula, and independence.  We fitted \eqref{eq:bnpmix} and the exchangeable mixture with Clayton, Frank and Gumbel kernels.

Table~\ref{tab:sim1} gives the results. When the truth is exchangeable, the extended model sends both Khoudraji coordinates towards one: at $n=500$ the posterior means range from $(0.96,0.96)$ to $(0.99,0.99)$. The exchangeable model has the larger LPML for all three kernels, as expected when the extra parameters are unnecessary. For the asymmetric truth, the distinction becomes clear at $n=500$. The Clayton fit recovers $(\hat\alpha,\hat\beta)=(0.86,0.35)$ and raises LPML from $27.15$ to $44.32$; the Frank and Gumbel gains are $10.66$ and $5.22$, respectively. The corresponding posterior-predictive tau under Clayton is $0.230$, close to the empirical $0.239$. At $n=200$ the sample tau is only $0.093$, and consequently the direction is weakly identified.

For the diluted symmetric truth, extended Clayton improves LPML by $2.27$ at $n=500$ and estimates $(0.52,0.41)$; the Frank gain is small and Gumbel does not improve.  Under independence, all posterior-predictive tau's are close to zero and the broad posteriors of both $\balpha$ and $\theta$ reflect their weak identification. Thus, the experiment supports recovery of substantial asymmetry, but also shows that the Khoudraji coordinates should not be interpreted componentwise when dependence is nearly absent.

\begin{table}[p]
\centering
\footnotesize
\begin{tabular}{cclrrrrrr}
\hline\hline\\[-2mm]
True $\balpha$ & $n$ & Kernel & \multicolumn{2}{c}{LPML} & $\hat\theta$ & $\hat\balpha$ & $\tilde\tau$ & $\hat\tau$ \\
\cline{4-5}
 & & & Exch & Khou & & & & \\
\hline
$(1,1)$ & 200 & CLA & 208.15 & 206.38 & 5.90 & (0.98,0.99) & 0.758 & 0.725 \\
$(1,1)$ & 200 & FRA & 167.23 & 166.28 & 11.89 & (0.99,0.97) & 0.758 & 0.684 \\
$(1,1)$ & 200 & GUM & 158.22 & 142.58 & 5.07 & (0.92,0.91) & 0.758 & 0.660 \\
$(0.8,0.3)$ & 200 & CLA & -0.21 & 0.58 & 3.93 & (0.41,0.34) & 0.093 & 0.085 \\
$(0.8,0.3)$ & 200 & FRA & 1.19 & 0.64 & 1.42 & (0.55,0.54) & 0.093 & 0.060 \\
$(0.8,0.3)$ & 200 & GUM & 0.36 & -0.87 & 4.05 & (0.21,0.22) & 0.093 & 0.060 \\
$(0.4,0.4)$ & 200 & CLA & 1.05 & 3.45 & 4.47 & (0.48,0.37) & 0.142 & 0.138 \\
$(0.4,0.4)$ & 200 & FRA & 3.75 & 2.92 & 2.11 & (0.67,0.59) & 0.142 & 0.113 \\
$(0.4,0.4)$ & 200 & GUM & 1.62 & 0.66 & 3.76 & (0.48,0.18) & 0.142 & 0.068 \\
$(0,0)$ & 200 & CLA & -3.48 & -0.83 & 3.62 & (0.22,0.28) & -0.053 & 0.010 \\
$(0,0)$ & 200 & FRA & -0.45 & 0.18 & -1.18 & (0.45,0.45) & -0.053 & -0.022 \\
$(0,0)$ & 200 & GUM & -3.52 & -1.75 & 4.75 & (0.14,0.15) & -0.053 & 0.024 \\ \hline
$(1,1)$ & 500 & CLA & 491.91 & 490.14 & 4.78 & (0.99,0.99) & 0.733 & 0.690 \\
$(1,1)$ & 500 & FRA & 374.46 & 374.13 & 11.40 & (0.98,0.98) & 0.733 & 0.679 \\
$(1,1)$ & 500 & GUM & 342.88 & 319.76 & 4.42 & (0.96,0.96) & 0.733 & 0.612 \\
$(0.8,0.3)$ & 500 & CLA & 27.15 & 44.32 & 5.08 & (0.86,0.35) & 0.239 & 0.230 \\
$(0.8,0.3)$ & 500 & FRA & 30.28 & 40.94 & 5.39 & (0.94,0.43) & 0.239 & 0.230 \\
$(0.8,0.3)$ & 500 & GUM & 22.01 & 27.23 & 2.51 & (0.84,0.31) & 0.239 & 0.190 \\
$(0.4,0.4)$ & 500 & CLA & 18.98 & 21.25 & 5.14 & (0.52,0.41) & 0.187 & 0.190 \\
$(0.4,0.4)$ & 500 & FRA & 19.37 & 19.91 & 3.23 & (0.74,0.62) & 0.187 & 0.167 \\
$(0.4,0.4)$ & 500 & GUM & 20.82 & 19.20 & 1.82 & (0.55,0.43) & 0.187 & 0.161 \\
$(0,0)$ & 500 & CLA & -2.60 & -0.60 & 3.56 & (0.37,0.18) & -0.012 & 0.021 \\
$(0,0)$ & 500 & FRA & -0.84 & -0.45 & -0.30 & (0.40,0.42) & -0.012 & -0.002 \\
$(0,0)$ & 500 & GUM & -2.64 & -1.46 & 4.56 & (0.21,0.20) & -0.012 & 0.019 \\
\hline\hline
\end{tabular}
\caption{Simulation 1: data generated from a single Clayton-Khoudraji model with $\theta=5$ and the true $\balpha$ given in the first column. LPML for the exchangeable and Khoudraji mixtures; occupancy-weighted posterior means under the Khoudraji model; sample tau $\tilde\tau$; and posterior-predictive tau $\hat\tau$.}
\label{tab:sim1}
\end{table}

\subsection{Simulation 2: a mixture of mirror-asymmetric components}

This experiment asks whether the Khoundraji coordinates $\balpha$ allow the mixture to separate clusters that an exchangeable kernel can not separate. We generated $n=500$ observations from the equally weighted two-component Clayton--Khoudraji mixture
\[c(u,v)=
\tfrac12\,c_{\theta,1,0.65}(u,v)
+\tfrac12\,c_{\theta,0.65,1}(u,v),\qquad \theta=50 .
\]
The components are mirror images: they share $\theta$ and hence the same Kendall tau ($\approx0.63$) and marginal behaviour, and differ only in the direction of asymmetry.  Every exchangeable Archimedean kernel is invariant under the swap $(u,v)\mapsto(v,u)$, so any esxchangeable mixture assigns a point and its reflection the same density; the Bayes allocation rule based on the best exchangeable pair of kernels therefore has accuracy $0.50$ by symmetry, whereas the allocation rule using the true Khoudraji parameters has accuracy about $0.87$ and adjusted Rand index (ARI) about $0.55$. The ARI is modest because the two components overlap near the diagonal and in the upper corner.

The prior on $\theta$ is $\ga(4,0.08)$ (mean $50$) so that the true value is not in the far tail of the default $\ga(4,1)$, and the prior on $b+a$ is $\ga(2,10)$; all other settings are as above. We report the following recovery summaries against the generating labels: the posterior-mean ARI, i.e. the ARI of each retained allocation vector, averaged over draws; the ARI of the SALSO point estimate under Binder loss, restricted to at most two clusters, and under VI loss. 
The whole experiment is repeated on 20 independent seeded data sets. The oracle fit is that of the Bayes allocation rule using the true parameters. 

\paragraph{A single data set.} The featured data set (replicate~1) has 252 and 248 observations from the two components and overall sample tau $0.608$.  Table~\ref{tab:sim2} shows that the extended model raises LPML from $166.58$ to $527.10$.  The exchangeable fit places all observations in one group under Binder and VI loss (ARI $0.00$), while its posterior-predictive tau of $0.477$ is well below the sample value. The Khoudraji fit gives ARI $0.52$ under unrestricted Binder loss, which selects 7 groups (two large groups and small fragments), and $0.54$ when Binder loss is restricted to at most two groups, close to the oracle benchmark. These two groups have sizes 246 and 254, and occupancy-weighted posterior means $(\hat\theta,\hat\balpha)=(51.58,0.92,0.69)$ and $(54.56,0.68,0.91)$ (Table~\ref{tab:sim2-clusters}): the fit recovers both $\theta$ and the mirrored directions of asymmetry. Figure~\ref{fig:sim2} shows that the Khoudraji predictive law reproduces the two ridges on either side of the diagonal, whereas the exchangeable predictive law merges them into a single cloud, and that the PSM has the expected two-block structure. The VI summary, in contrast, merges the Khoudraji fit into a single group (ARI $0.00$); this sensitivity to the loss function is reported rather than suppressed, and it reflects genuine posterior uncertainty: the mean posterior co-clustering probability is $0.67$ for pairs from the same true component and $0.33$ for pairs from different components.

\begin{table}[ht!]
\centering
\small
\begin{tabular}{lrrrrrrrrr}
\hline\hline\\[-4mm]
Model & LPML & $\hat\tau$ & $\hat{m}$ & $\widehat{\rm ARI}$ & $K_B$ & ARI$_B$ & ARI$_{B,2}$ & $K_V$ & ARI$_V$ \\
\hline
Exch & 166.58 & 0.477 & 1.99 & 0.00 & 1 & 0.00 & 0.00 & 1 & 0.00 \\
Khou & 527.10 & 0.551 & 2.60 & 0.34 & 7 & 0.52 & 0.54 & 1 & 0.00 \\
\hline\hline
\end{tabular}
\caption{Simulation 2, replicate 1: data generated from the equally weighted mixture of two Clayton-Khoudraji copulas with $\theta=50$ and $(\alpha,\beta)=(1,0.65)$ and $(0.65,1)$ ($n=500$). Posterior summaries: LPML, Kendall's tau $\hat\tau$, number of groups $\hat{m}$, and ARI; $K_B$ and $K_V$ are the numbers of groups selected by unrestricted Binder and VI loss with their respective ARI$_B$ and ARI$_V$; ARI$_{B,2}$ is the Binder result constrained to at most two groups.}
\label{tab:sim2}
\end{table}

\begin{table}[ht!]
\centering
\begin{tabular}{lrrrr}
\hline\hline\\[-4mm]
Partition & Cluster & Size & $\hat\theta$ & $\hat\balpha$ \\
\hline
Binder $K\leq2$ & 1 & 246 & 51.58 & (0.92,0.69) \\
Binder $K\leq2$ & 2 & 254 & 54.56 & (0.68,0.91) \\
VI & 1 & 500 & 53.09 & (0.80,0.80) \\
\hline\hline
\end{tabular}
\caption{Simulation 2, replicate 1: data generated from the equally weighted mixture of two Clayton-Khoudraji copulas with $\theta=50$ and $(\alpha,\beta)=(1,0.65)$ and $(0.65,1)$ ($n=500$). Cluster sizes and occupancy-weighted parameter summaries for the Khoudraji Clayton fit under the two-cluster Binder and
unrestricted VI summaries.}
\label{tab:sim2-clusters}
\end{table}

\begin{figure}[ht!]
\centering
\includegraphics[width=\textwidth]{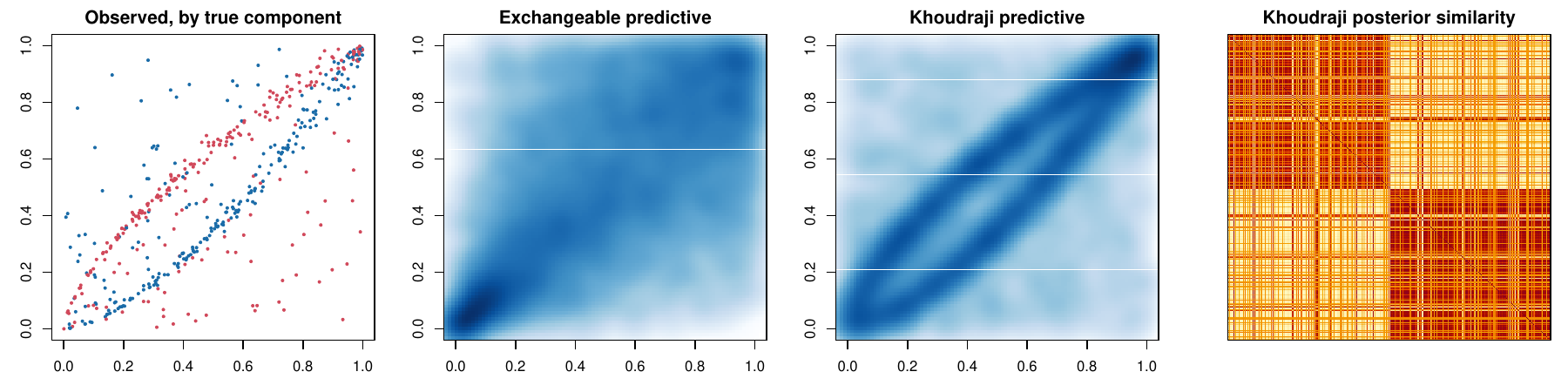}
\caption{Simulation 2, replicate 1: data generated from the equally weighted mixture of two Clayton-Khoudraji copulas with $\theta=50$ and $(\alpha,\beta)=(1,0.65)$ and $(0.65,1)$ ($n=500$). Observed pseudo-observations coloured by true component, posterior-predictive contours for the exchangeable and Khoudraji Clayton mixtures, and the Khoudraji posterior similarity matrix with observations ordered by true component.}
\label{fig:sim2}
\end{figure}

\paragraph{Twenty data sets.}  Table~\ref{tab:sim2-replicates} and Figure~\ref{fig:sim2-replicates} summarize all replicates. The extended model has the larger LPML in every data set, around $348.8$ on average (range $296.7$ to $416.2$).  Cluster recovery tells the same story: the exchangeable model has ARI essentially $0$ in every data set (it selects a single group in 19 of 20 and two groups once), whereas the extended model attains a mean ARI of $0.55$ (standard deviation $0.05$) under both unrestricted Binder loss and the PSM cut, matching the oracle benchmark of $0.55$ (standard deviation $0.04$); the average difference from the oracle is $-0.002$.  The Binder $K\le2$ summary returns two groups with opposite orientation ($\hat\alpha>\hat\beta$ in one group and $\hat\alpha<\hat\beta$ in the other) in all 20 Khoudraji fits and in none of the exchangeable fits. The posterior-mean ARI is lower ($0.37$), reflecting the genuine posterior uncertainty about the allocation of observations near the diagonal and in the upper corner. Unrestricted Binder loss selects between 2 and 11 groups (mean $7.7$), but the additional groups are small fragments that do not change the ARI, and VI loss merges the extended fit into a single group in all 20 data sets. We therefore recommend Binder (possibly restricted) or PSM-based summaries for reporting cluster recovery here.

\begin{table}[ht!]
\centering
\small
\begin{tabular}{lcc}
\hline\hline
Summary over 20 data sets & Khou & Exch \\
\hline
LPML & 485.4 (31.5) & 136.6 (12.7) \\
$\hat{m}$ & 2.76 (0.13) & 1.61 (0.37) \\
$K_B$, Binder & 7.7 (2.1) & 1.1 (0.2) \\
ARI, Binder & 0.55 (0.05) & 0.00 (0.00) \\
ARI, Binder ($K\leq2$) & 0.55 (0.05) & 0.00 (0.00) \\
ARI, VI & 0.00 (0.00) & 0.00 (0.00) \\
ARI, PSM cut at $K=2$ & 0.55 (0.04) & 0.00 (0.00) \\
$\widehat{\rm ARI}$ & 0.37 (0.03) & 0.00 (0.00) \\
PSM within $-$ between & 0.37 (0.03) & 0.00 (0.00) \\
Oracle ARI & \multicolumn{2}{c}{0.55 (0.04)} \\
Khou beats Exch in LPML / ARI$_{B,2}$ & \multicolumn{2}{c}{20 / 20 of 20} \\
True clusters found, Binder ($K\leq2$) & 20 of 20 & 0 of 20 \\
\hline\hline
\end{tabular}
\caption{Simulation 2 over 20 seeded data sets, each generated from the equally weighted mixture of two Clayton-Khoudraji copulas with $\theta=50$ and $(\alpha,\beta)=(1,0.65)$ and $(0.65,1)$ ($n=500$). Mean (standard deviation) of each summary under the Khoudraji and exchangeable models.}
\label{tab:sim2-replicates}
\end{table}

\begin{figure}[ht!]
\centering
\includegraphics[width=\textwidth]{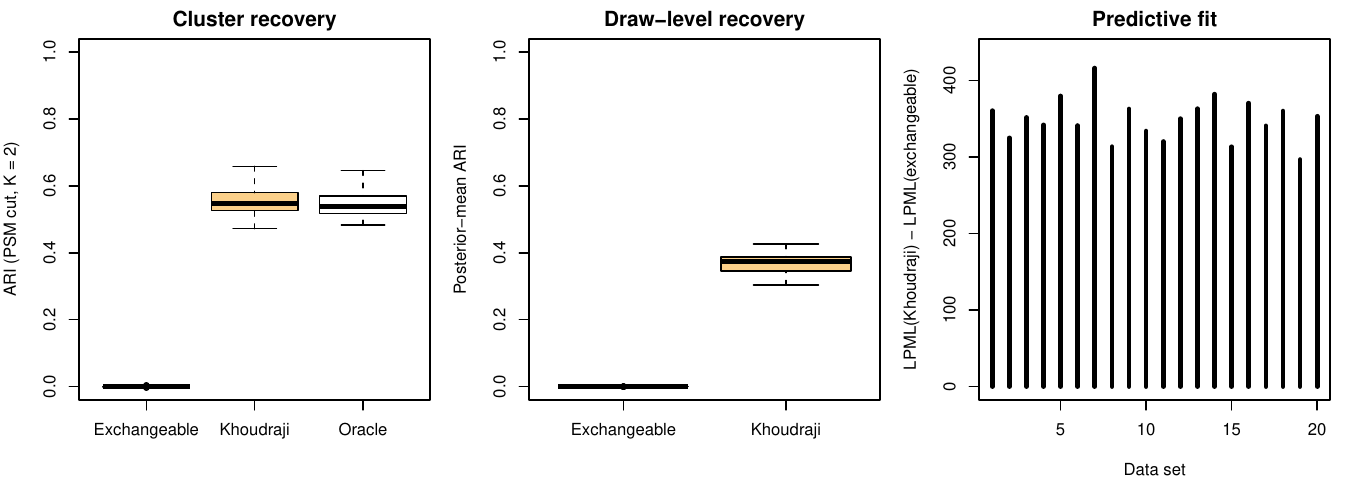}
\caption{Simulation 2 over 20 seeded data sets, each generated from the equally weighted mixture of two Clayton-Khoudraji copulas with $\theta=50$ and $(\alpha,\beta)=(1,0.65)$ and $(0.65,1)$ ($n=500$). Left: ARI of the PSM cut at $K=2$ for the exchangeable and Khoudraji fits and for the oracle allocation rule.  Centre: posterior-mean ARI.  Right: LPML difference, Khoudraji minus exchangeable.}
\label{fig:sim2-replicates}
\end{figure}

\subsection{Simulation 3: unequal pairwise tau's in dimension three}

We generate $n=1000$ observations from a three-dimensional Khoudraji--Gumbel copula with $\theta=4$ and $\balpha=(1,0.7,0.2)$. The three sample pairwise tau's are $(0.553,0.188,0.182)$. The exchangeable mixture compromises at approximately $(0.297,0.293,0.294)$, whereas the extended model gives $(0.560,0.196,0.184)$ (Table~\ref{tab:sim3}). It also recovers $\balpha$ almost exactly: its posterior means are $(0.995,0.702,0.214)$, with marginal 95\% credible intervals $(0.985,1.000)$, $(0.672,0.732)$ and $(0.200,0.229)$, respectively. The resulting LPML gain is $470.34$. Clearly, the Khoudraji layer provides a dependence structure that no mixture of exchangeable Ar\-chi\-me\-dean kernels can reproduce.

\begin{table}[ht!]
\centering
\begin{tabular}{lrrrrr}
\hline\hline
Model & LPML & $\hat\balpha$ & $\tau_{12}$ & $\tau_{13}$ & $\tau_{23}$ \\
\hline
Sample & --- & --- & 0.553 & 0.188 & 0.182 \\
Exch & 317.17 & (1.00,1.00,1.00) & 0.297 & 0.293 & 0.294 \\
Khou & 787.51 & (1.00,0.70,0.21) & 0.560 & 0.196 & 0.184 \\
\hline\hline
\end{tabular}
\caption{Simulation 3: $n=1000$ observations generated from a three-dimensional Gumbel-Khoudraji copula with $\theta=4$ and $\balpha=(1,0.7,0.2)$. LPML, occupancy-weighted posterior mean of $\balpha$, and posterior-predictive pairwise Kendall tau's.}
\label{tab:sim3}
\end{table}

\subsection{Simulation 4: Gaussian copulas}

We revisit Simulation~2 of \cite{pan&al:25}: a single Gaussian copula with $\rho=0.7$ and an equally weighted two-component Gaussian mixture with $\rho\in\{-0.7,0.7\}$. Table~\ref{tab:sim4} and Figure~\ref{fig:sim4} show that the additional coordinates do not generally solve this misspecification. For the single Gaussian copula, Khoudraji Clayton gains $16.34$ LPML units over exchangeable Clayton, but exchangeable Frank remains overall best (LPML $153.69$) and its predictive tau $0.493$ closely matches the empirical $0.491$. The extended Frank and Gumbel fits place $\balpha$ close to $\bone$ and lose $1.43$ and $2.41$ LPML units, respectively.

For the two-component Gaussian mixture the sample tau is only $0.038$, despite strong dependence within each component. Frank mixtures capture the opposing signs through two principal clusters and dominate the one-sided Clayton and Gumbel families. Exchangeable Frank attains LPML $54.19$, compared with $53.63$ for extended Frank; its predictive tau is $0.005$. Khoudraji Clayton improves upon exchangeable Clayton but remains more than $40$ LPML units behind Frank. Here, the limitation is the generator's ability to combine positive and negative dependence, not exchangeability, and the Khoudraji coordinates cannot compensate for it.

\begin{table}[ht!]
\centering
\small
\begin{tabular}{llrrrrrr}
\hline\hline
Truth & Kernel & \multicolumn{2}{c}{LPML} & $\Delta$ & $\hat\balpha$ & \multicolumn{2}{c}{$\hat\tau$} \\
\cline{3-4}\cline{7-8}
 & & Exch & Khou & & & Exch & Khou \\
\hline
$\rho=0.7$ & CLA & 126.72 & 143.06 & +16.34 & (0.85,0.78) & 0.428 & 0.427 \\
$\rho=0.7$ & FRA & 153.69 & 152.26 & -1.43 & (0.97,0.96) & 0.493 & 0.480 \\
$\rho=0.7$ & GUM & 141.51 & 139.09 & -2.41 & (0.97,0.97) & 0.459 & 0.438 \\
$\rho=\{-0.7,0.7\}$ & CLA & 6.92 & 12.89 & +5.97 & (0.37,0.33) & 0.093 & 0.117 \\
$\rho=\{-0.7,0.7\}$ & FRA & 54.19 & 53.63 & -0.56 & (0.93,0.92) & 0.005 & 0.000 \\
$\rho=\{-0.7,0.7\}$ & GUM & 15.52 & 10.71 & -4.80 & (0.19,0.38) & 0.116 & 0.098 \\
\hline\hline
\end{tabular}
\caption{Simulation 4: data generated from a Gaussian copula, either a single component with $\rho=0.7$ or an equally weighted mixture of two Gaussian copulas with $\rho=-0.7$ and $\rho=0.7$ (first column; $n=500$). LPML and posterior-predictive tau for exchangeable and Khoudraji models. Here $\Delta=\mathrm{LPML}_{\rm Khou}-\mathrm{LPML}_{\rm Exch}$ and $\hat\balpha$ is reported for the Khoudraji fit.}
\label{tab:sim4}
\end{table}

\begin{figure}[p]
\centering
\includegraphics[width=.98\textwidth]{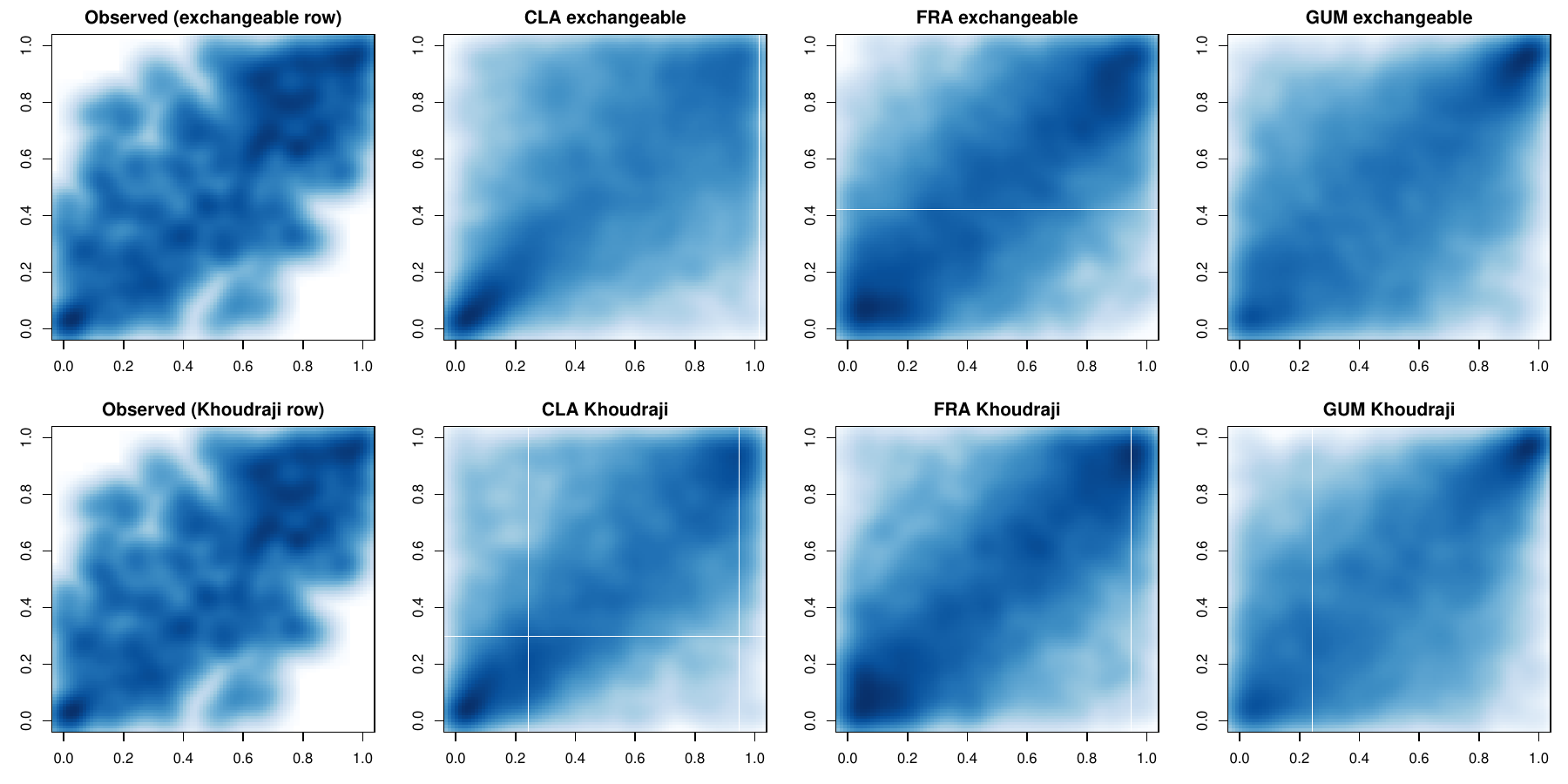}\\[2mm]
\includegraphics[width=.98\textwidth]{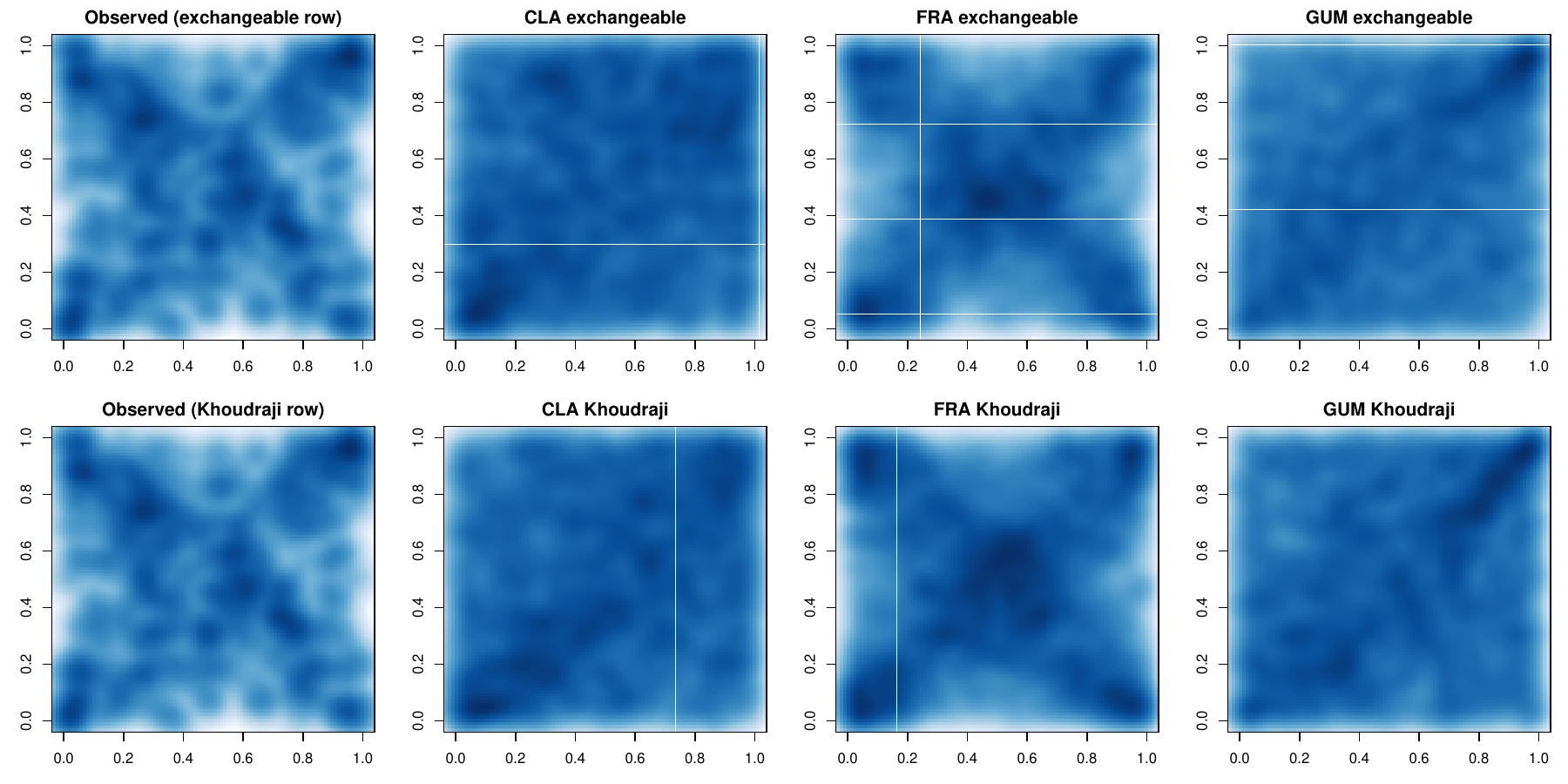}
\caption{Simulation 4: data generated from a Gaussian copula with $\rho=0.7$ (top two rows) or from an equally weighted mixture of Gaussian copulas with $\rho=-0.7$ and $\rho=0.7$ (bottom two rows), $n=500$. Observed samples (first column) and posterior-predictive contours; within each pair of rows, exchangeable fits appear above Khoudraji fits.}
\label{fig:sim4}
\end{figure}

\subsection{Occupancy and red-wine data}
\label{sec:real}

We refit the two data sets of \cite{pan&al:25} with the Khoudraji mixture \eqref{eq:bnpmix}, using the same modified rank transform, the same centering measure $g_0$ and the same MCMC length: $15{,}000$ iterations, burn-in $5{,}000$, thinning $5$, with the tuning constants adapted during burn-in only (Section~\ref{ssec:posterior}). Copula densities are evaluated in compiled C++, and LPML is computed exactly as in the simulations (Section~\ref{ssec:fitting}), which is the estimator of \cite{pan&al:25}. The exchangeable model $\balpha=\bone$ is the Poisson--Dirichlet mixture of \cite{pan&al:25}; comparing the two LPML values isolates the contribution of the Khoudraji device. 

\paragraph{Occupancy.}
We take occupancy measurements of \cite{candanedo2016accurate}: five-minute averages of CO$_2$ and humidity ratio on 5 February 2015, first-difference, and rank-transform ($n=287$). The sample Kendall tau is $0.481$, as seen in Table~\ref{tab:lpml-occ} which reports the LPML for the exchangeable and Khoudraji models for 4 copula families. For Gumbel, the exchangeable refit ($115.5$) and the Khoudraji model ($113.9$) are practically indistinguishable: the posterior of $(\alpha_1,\alpha_2)$ concentrates near one, with means $(0.91,0.92)$. The star-shaped scatter is therefore exchangeable; asymmetry is not needed, and a single Gumbel component with $\theta\approx 2.4$ remains the preferred description. %
In no occupancy model does the Khoudraji layer improve on the exchangeable fit.

\begin{table}[ht!]
\centering
\begin{tabular}{lrr} \hline\hline \\[-4mm]
Model &  Exch & Khoudraji \\[1mm] \hline \\[-3mm]
AMH &  $76.7$ & $74.8$ \\
CLA &  $92.4$ & $81.5$ \\
FRA &  $91.2$ & $87.2$ \\
GUM &  $115.5$ & $113.9$ \\
\hline\hline
\end{tabular}
\caption{Occupancy data (real data: the generating model is unknown). LPML for the  exchangeable model ($\balpha=\bone$), and the Khoudraji mixture. }
\label{tab:lpml-occ}
\end{table}

\paragraph{Red wine.}
The three-dimensional red-wine data of \cite{cortez2009modeling} has size $n=1{,}599$ and three variables: fixed acidity, citric acid, density. The sample pairwise tau's are: $\tilde\tau_{12}=0.484$, $\tilde\tau_{13}=0.457$ and $\tilde\tau_{23}=0.245$. An exchangeable Archimedean mixture must use a single $\tau$ for all three pairs. Table~\ref{tab:lpml-wine} shows  our exchangeable and Khoudraji fits.  Allowing $\balpha$ to vary raises LPML, relative to the exchangeable refits, by $158$ for Gumbel, $133$ for Frank and $204$ for Clayton. The Gumbel Khoudraji mixture is the preferred model. Its posterior mean of $\balpha$ is $(0.93,0.65,0.66)$: fixed acidity stays almost fully in the Archimedean factor, while citric acid and density are diluted, which is exactly the pair with the smaller empirical tau. The posterior mode of the number of components is $4$, against $2$ in the exchangeable Gumbel fit.

\begin{table}[ht!]
\centering
\begin{tabular}{lrrl} \hline\hline \\[-3mm]
Model &  Exch & Khoudraji & $\hat\balpha$ \\[1mm] \hline \\[-3mm]
CLA &  $607.9$ & $811.9$ & $(0.75,0.58,0.60)$ \\
FRA &  $679.6$ & $812.8$ & $(0.94,0.69,0.62)$ \\
GUM &  $716.4$ & $874.1$ & $(0.93,0.65,0.66)$ \\
\hline\hline
\end{tabular}
\caption{Red wine data (real data: the generating model is unknown). LPML for the exchangeable mixture model ($\balpha=\bone$), and the Khoudraji mixture model, together with the posterior mean of $(\alpha_1,\alpha_2,\alpha_3)$ for fixed acidity, citric acid and density. }
\label{tab:lpml-wine}
\end{table}

\section{Concluding remarks}
\label{sec:conclusion}

The Poisson--Dirichlet mixture of Archimedean copulas of \cite{pan&al:25} inherits the exchangeability of its kernel. Replacing that kernel by the Khoudraji transform \eqref{eq:khoudraji} keeps the computational structure of their sampler --- a P\'olya urn on atoms, Neal's Algorithm~8, and the same updates for $(a,b)$ --- and adds a vector $\balpha\in[0,1]^d$ that interpolates, coordinate-wise, between Archimedean dependence and independence. 

Pairwise Kendall tau's are then allowed to differ, lower-tail dependence requires $\alpha_j=\alpha_k=1$ while upper-tail dependence (Gumbel, Joe) survives for $\alpha_j,\alpha_k>0$, and clusters may differ in both the strength $\theta$ and the direction $\balpha$ of association. The exchangeable model is obtained when $\balpha=\bone$, so the two specifications are comparable by LPML.

For the occupancy data, the posterior of $\balpha$ concentrates at $\bone$ and the Khoudraji Gumbel mixture matches the exchangeable fit of \cite{pan&al:25}.  On the red-wine data, where the three pairwise tau's are not equal, the same Gumbel kernel with an extended $\balpha$ improves LPML by about $160$ and loads Archimedean dependence on fixed acidity while diluting the citric-acid--density pair.

Future work will consider mixtures of vines or hierarchical Archimedean copulas \citep{hofert&pham:13,li&al:21}.  

\section*{Acknowledgements}
The first author was supported by NSERC of Canada RGPIN-2024-04506 Discovery Grant. The second author was supported by \textit{Asociaci\'on Mexicana de Cultura, A.C.}

\bibliographystyle{ims}
\bibliography{References}

\end{document}